\newcommand{\eps}{\epsilon}
\newcommand{\mc}{\mathcal}
\newcommand{\mb}{\mathbb}
\newcommand{\T}{\intercal}
\newtheorem{definition}{Definition}
\newtheorem{assumption}{Assumption}
\newtheorem{proposition}{Proposition}
\newtheorem{theorem}{Theorem}
\newtheorem{problem}{Problem}
\newcommand{\paramZ}{\theta}
\newcommand{\param}{\vartheta}
\begin{document}

\title[Equilibrium Selection in Data Markets: Multiple-Principal, Multiple-Agent Problems with Non-Rivalrous Goods]{Equilibrium Selection in Data Markets: Multiple-Principal, Multiple-Agent Problems with Non-Rivalrous Goods}


\author[1]{\fnm{Samir} \sur{Wadhwa}}

\author[2]{\fnm{Roy} \sur{Dong}}

\affil[1]{\orgdiv{Department of Mechanical Science and Engineering}, \orgname{University of Illinois at Urbana-Champaign}}
\affil[2]{\orgdiv{Department of Electrical and Computer Engineering}, \orgname{University of Illinois at Urbana-Champaign}}


\abstract{There are several aspects of data markets that distinguish them from a typical commodity market: asymmetric information, the non-rivalrous nature of data, and informational externalities. Formally, this gives rise to a new class of games which we call multiple-principal, multiple-agent problem with non-rivalrous goods. Under the assumption that the principal's payoff is quasilinear in the payments given to agents, we show that there is a fundamental degeneracy in the market of non-rivalrous goods. This multiplicity of equilibria also affects common refinements of equilibrium definitions intended to uniquely select an equilibrium: both variational equilibria and normalized equilibria will be non-unique in general. This implies that most existing equilibrium concepts cannot provide predictions on the outcomes of data markets emerging today. The results support the idea that modifications to payment contracts themselves are unlikely to yield a unique equilibrium, and either changes to the models of study or new equilibrium concepts will be required to determine unique equilibria in settings with multiple principals and a non-rivalrous good.}

\keywords{data aggregation, game theory, statistical analysis, generalized Nash equilibrium (GNE), variational inequality (VI)}


\maketitle


\section{Introduction}
\label{sec:introduction}


The growing importance of data analytics and machine learning tools for companies to remain competitive has led to data becoming a marketable good in its own right. 
The importance of data analytics has led to the rise of data markets where companies and other entities put forth payments and incentives to obtain data. This can come either in the form of users sharing their personal information, or in the form of workers who exert effort to harvest data. There exist several platforms which act as intermediaries in this exchange, such as Amazon Mechanical Turk, Witkey, and Terbine.

There are several aspects of data markets that distinguish them from a typical commodity market. First, the seller often knows more about the quality of the good than the buyer. In the context of data markets, it is often difficult for a buyer to verify the quality of data: sellers may be hesitant to let buyers `look' at the data beforehand, and, even when in hand, it may be difficult to verify the quality of data. This was an issue early on with Massive Open Online Courses (MOOCs), which scaled by using peer evaluation to assign grades. Course participants often did not exert enough effort to accurately grade their peers and, even with peer evaluations in hand, it was difficult to determine the quality of submitted assignments~\cite{piech2013tuned, miller2005eliciting}. This gives rise to the problem of \textit{asymmetric information}: how can data buyers incentivize data sources to share informative and high-quality data?

Second, data is practically costless to copy, once it has been created or collected. As a result, data sources are able to sell the \textit{same} data to multiple buyers. Exclusivity contracts are often impractical. In many settings, the scale of data collection can make the costs of creating and signing exclusivity contracts intractable. Additionally, in many corporate settings, exclusivity contracts are difficult to enforce because of a competitor's intellectual property walls, which make it difficult to show whether or not a dataset was used. Data is a \textit{non-rivalrous good}, which means that data sources will behave differently in settings with multiple data buyers. It's been shown that non-rivalry can often result in free-riding and social inefficiencies~\cite{morgan-00}. Data markets are no exception~\cite{westenbroek-19}.

Third, when one person shares their data, it can reveal information about others. In the context of data markets, whenever one data source shares their data, it can \textit{change} the informational innovation of every other source's data. Put another way, there are \textit{informational externalities}. Informational externalities can often lead to a decreased market value, since the improvement in inference from other data sources is decreased~\cite{cai-15}. Additionally, it can result in privacy fatalism~\cite{acquisti2015privacy}, since users feel that big corporations `already know everything about them'. This is also sometimes referred to as the `network effects' of data. 

In summary, the three unique characteristics that often arise in data markets are: asymmetric information, non-rivalrous goods, and informational externalities. Previous works in game-theoretic modeling of data markets often consider one or two of these characteristics, but, to the best of our knowledge, this paper is one of the first works that jointly model all three of these facets simultaneously. Models which do not account for asymmetric information often assume the data quality is exogenously fixed, and the data-sharing decision is an opt-in/opt-out decision. Models which do not account for non-rivalry often only have a single data buyer. Models which do not account for informational externalities often only have a single data source, or assume that the data sources are independent.

In settings where asymmetric information, non-rivalry, and information externalities must jointly be considered in the model, this naturally gives rise to a class of problems we refer to as \textbf{multiple-principal, multiple-agent problems with non-rivalrous goods (MPMA-NRG problems)}. Since there is asymmetric information, a data buyer (principal) must incentivize a data source (agent) to exert effort or shed privacy and share accurate data. Since there is non-rivalry, we must consider multiple principals, and since there are informational externalities, we must consider multiple agents. MPMA-NRG problems are a natural class to capture the characteristics of data markets. 

The contribution of this paper is as follows. To the best of our knowledge, this is the first work to identify the class of MPMA-NRG problems and generally study their structure. We look at a general class of payment functions, and show that whenever the principals' payoffs are quasilinear in the payments given, there will be an infinite number of generalized Nash equilibria. Additionally, neither the variational equilibria concept nor the normalized equilibria concept can act as an equilibrium selection method: we show that both these equilibrium concepts will yield a multiplicity of equilibria. These results point at a fundamental ambiguity that exists under very mild conditions for the MPMA-NRG problem.

The rest of this paper is organized as follows. In Section~\ref{sec:background}, we contextualize our contributions among the existing literature. In Section~\ref{sec:model}, we outline our model and formally state our problem of study. Additionally, we show how our model includes some recently studied models for data markets as a special case. In Section~\ref{sec:eq_concepts}, we introduce the equilibrium concepts that will be applied throughout the paper. In Section~\ref{sec:eq_fail}, we prove that multiple-principal, multiple-agent problems with non-rivalrous goods will have an infinite number of generalized Nash equilibria, variational equilibria, and normalized equilibria. This section contains the main theoretical contributions of this paper. We provide closing remarks in Section~\ref{sec:conclusion}.


\section{Background}
\label{sec:background}


The literature relevant to this work can be divided into three broad categories: 1) the statistical and game-theoretic study of data markets, 2) the analysis of markets where asymmetric information arises, and 3) the exploration of equilibrium selection methods.

With regards to the study of data markets, there have been recent works on quantifying the value of data shared. One approach is to consider the usage of data as a coalitional game, where the data points act as the agents and the cooperative goal is some data-driven task~\cite{Kleinberg:2001aa}. This gives rise to metrics for the value of data which are nice theoretically, but often intractable to calculate for large datasets. Recent works have seeked to address this issue using neural networks~\cite{Ghorbani:2019aa} and hashing techniques~\cite{Jia:2019ab, Jia:2019aa}. These works are focused on quantifying the contributions of a data point, and do not consider the data sources as strategic in whether or not they share their information.

Additionally, there have been approaches that consider the strategic behavior of data sources, through game theory and mechanism design. 
In~\cite{Acemoglu:2022aa}, the authors consider a data market with an emphasis on informational externalities and the privacy costs incurred by data sources. This considers settings where the payoffs of the data sources are coupled, but there is only a single data aggregator and the option to share the data is binary (opt-in/opt-out). As a result, the data aggregators do not need to incentivize the data sources to generate informative data; the quality of data is exogenously given. 
In~\cite{cai-15}, the authors consider a setting where multiple data sources interact with a single data aggregator in a statistical estimation task. A peer-prediction based contract mechanism is designed which alleviates issues of asymmetric information. However, there is only a single data aggregator, so the non-rivalrous nature of data is not present. This was extended in~\cite{westenbroek-19} to settings with multiple data aggregators, and was where the multiplicity of equilibria was first shown in the context of data markets.~\cite{westenbroek-19} is the work most relevant to our paper; in this paper, we generalize these results to a broader class of games, and show that common equilibrium selection methods fail in this context.

Relatedly, there are other settings that have motivated recent work in the study of asymmetric information. In these settings, the asymmetric information can be modelled as a principal-agent problem. Broadly speaking, we can divide these works based on whether there is one or more principals. In principal-agent problems with a single principal, the strategic response of agents to a payment contract is studied to design contract mechanisms. These works seek to ensure truthful reporting from agents while maximizing the principal's utility. 
In~\cite{zhang-10}, an electricity market where a single purchaser acquires electricity from multiple sources to satisfy its demand is studied.~\cite{zhang-10} investigates the properties of equilibrium in this scenario and shows how local incentive compatibility (where agents only deviate in a neighbourhood of their true preferences) can be achieved. An electricity market where an electricity producer supplies to a large number of consumers with stochastic demand is studied in~\cite{bitar-13}.
In problems with multiple principals, the added effect of coupling between the decisions of the principals and their effect on the agent's response is studied. In~\cite{kasbekar-14}, spectrum markets with multiple buyers and sellers is modeled as a finite horizon dynamic game and optimal behaviour is determined using a combination of short-term and long-term contracts. 
In~\cite{attar-10}, failure of the revelation principle in the multiple-principal multiple-agent scenario when principals can resort to a complex communication scheme is shown. In~\cite{zeng-19}, the agent's strategy to contract only a subset of principals in order to maximize its utility is studied. 

Lastly, we discuss the work on equilibrium selection methods. 
The literature on equilibrium selection focuses on refinement of equilibrium concepts due to the issue of multiplicity of equilibria that often plagues generalized Nash equilibrium (GNE) problems. Variational equilibrium (VE) and normalized equilibrium (NoE) are the most popular equilibrium selection methods.~\cite{kulkarni-12} provides sufficient conditions for ensuring that VE is a refinement to GNE. In~\cite{facchinei-07} sufficient conditions for equivalence of VE and NoE are presented, however these conditions do not necessarily hold in our model and we treat the two as independent selection methods. Quasi-Nash equilibrium and constrained Nash equilibrium as selection methods are introduced in~\cite{pang-11} and~\cite{kulkarni-09}.~\cite{kulkarni-09} shows the relationship of a constrained Nash equilibrium with VE and GNE.~\cite{huang-13} shows that under the linear independent constraint qualification, the quasi-Nash equilibrium coincides with NoE for their model.

Our work is aimed at studying the consequences of non-rivalrous nature of data in realistic data markets with multiple buyers and multiple sellers. We generalize this study to markets with asymmetric information and any non-rivalrous good and show that the nature of the good leads to an infinite number of generalized Nash equilibria. Furthermore, this set cannot be refined by the concepts of variational equilibrium or normalized equilibrium. 

We model this scenario as a two-stage game. In the first stage, principals decide on contract parameters. In the second stage, agents decide their efforts based on the contracts available to them. We show that for all quasilinear payment contracts that induce a unique dominant strategy equilibrium amongst the agents, this game does not admit a unique solution and the principal bearing the cost of the public good is uncertain. Payment contracts of this form are commonly used in crowdsourcing~\cite{westenbroek-19},~\cite{shah-16},~\cite{suliman-19},~\cite{chen-17}. 

Our results substantiate the idea that modification to payment contracts are unlikely to admit a unique equilibrium and changes in modeling techniques or new equilibrium concepts will be required for analysis of settings with multiple principals and a non-rivalrous good.


\section{Model and Problem Statement}
\label{sec:model}


In this section, we introduce our model for multiple-principal, multiple-agent problems with non-rivalrous goods (MPMA-NRG problems) and state the problem of equilibrium selection. 

\subsection{The MPMA-NRG Problem}

First, let's recall the classical (single) principal-agent (PA) problem~\cite{stiglitz1989principal, grossmanpa}. The PA problem is a two-stage game. In the first stage of the game, the principal announces a payment function $p(\cdot)$ to the agent. In the second stage of the game, the agent chooses an effort level $e$ in response. The asymmetric information implies that the principal does not directly get to observe this effort level $e$, but instead sees a (potentially random) observation $y(e)$. Since the principal does not observe $e$, the payment function $p(\cdot)$ must depend on $y(e)$, and not $e$ directly. The principal wants to minimize their loss, which is a function of the value created by the effort as well as the payments they must give. The agent wishes to maximize their utility, which is a function of the payments received and the effort exerted. 
Since the PA problem is a two-stage game, it can be analyzed as a bi-level optimization.

In the classical PA problem, the principal acts in the first stage and the agent acts in the second stage. To generalize this to MPMA-NRG problems, we assume that the (multiple) principals act in the first stage and the (multiple) agents act in the second stage (see Figure~\ref{fig:pa_vs_mpma}).
In the classical principal-agent problem, the principal decides the optimization problem that the agent solves. In the multiple-principal, multiple-agent problem with non-rivalrous goods, play still occurs in two stages: the (multiple) principals play a game to decide which game is induced between the (multiple) agents.

\begin{figure}[!t]
\centerline{\includegraphics[width=\columnwidth]{./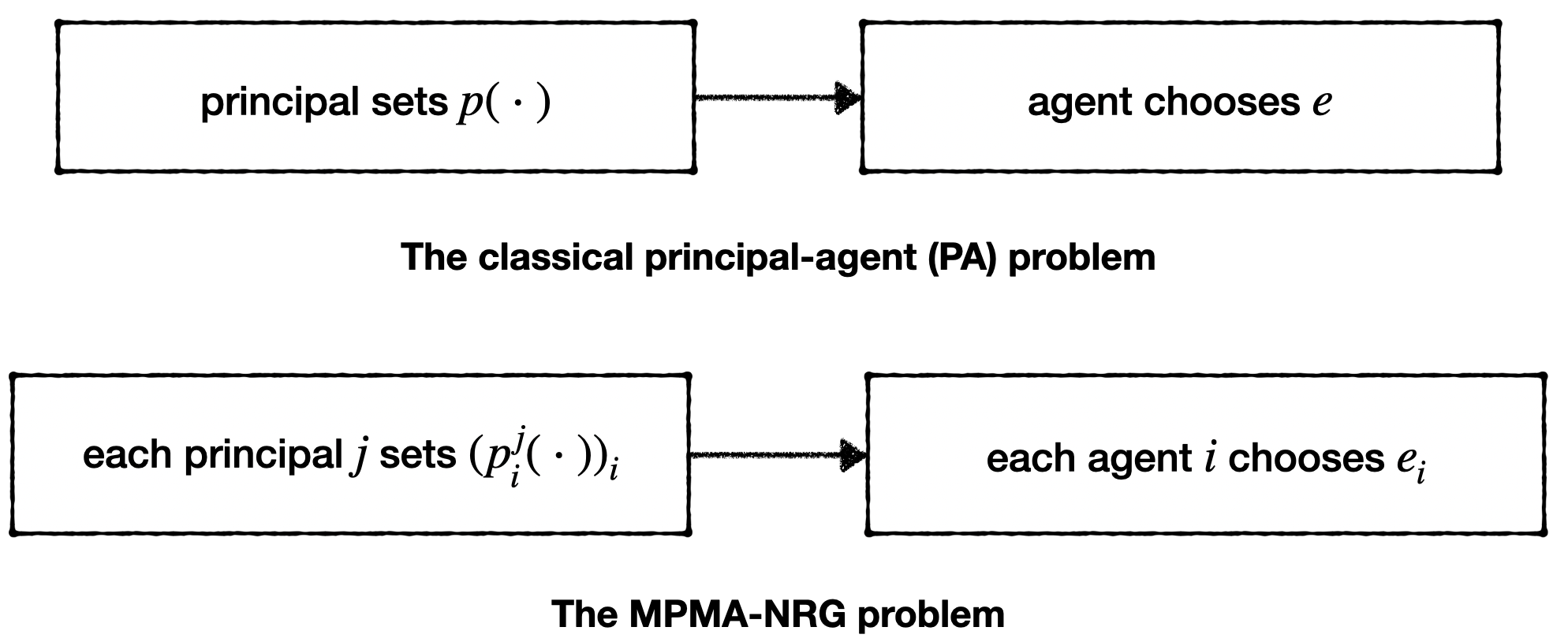}}
\caption{In the classical principal-agent problem, the principal chooses a payment function $p(\cdot)$ and then the agent chooses an effort $e$. In the MPMA-NRG problem, the principals choose their payment functions in the first stage and the agents respond with their efforts in the second stage.}
\label{fig:pa_vs_mpma}
\end{figure}

Formally, let $\mc{M} = \{1,,\dots,M\}$ denote the set of principals and $\mc{N} = \{1,\dots, N\}$ denote the set of agents. We will reserve the index variable $j$ for the principals and $i$ for the agents. 

In the first stage of the game, each principal $j \in \mc{M}$ chooses a payment function for each agent $i \in \mc{N}$, denoted $p_i^j(\cdot)$, resulting in a collection of payment functions $p^j = (p_i^j)_{i \in \mc{N}}$ for each principal $j$. These payment functions are announced simultaneously amongst all principals.

In the second stage of the game, each agent $i \in \mc{N}$ chooses an effort level $e_i \in \mb{R}$ to produce their good, resulting in a vector of efforts $e = (e_i)_{i \in \mc{N}}$, which we will refer to as the \textit{joint effort}. Since the good is non-rivalrous, this effort produces value for each of the principals. We let $v^j(e)$ denote the (potentially random) value created by the joint effort $e$ for principal $j$.

The asymmetric information means that the principal does not directly observe the joint effort $e$. Rather, each principal $j \in \mc{M}$ sees some (potentially random) observable $y^j(e)$. Since this is the information available to the principal, the payment functions should be a function of their available observation $y^j(e)$, rather than the joint effort $e$ directly.

In this paper, we make the following assumptions on the structural form of the principals' losses and the agents' utilities.

\begin{assumption}[Payment structure]
\label{ass:pay_struct}
Each payment function $p_i^j$ is parameterized by $(\paramZ_i^j,\param_i^j)$ as follows:
\[
p_i^j( y^j(e); \paramZ_i^j, \param_i^j ) = \paramZ_i^j + h^j(y^j(e); \param_i^j)
\]
Here, $h^j$ is the incentive structure that the principal $j$ chooses to employ and the $\paramZ_i^j$ parameter is the `base wage' payment which is independent of the observation. For notational simplicity, we will combine $h^j$ and $y^j$ into a single function (potentially random) function $f^j$:
\[
p_i^j( y^j(e); \paramZ_i^j, \param_i^j ) = \paramZ_i^j + f^j(e; \param_i^j)
\]
Note that this is for notational simplicity and should \textbf{not} be understood as the the payments depending directly on the effort. 
We assume $\textbf{f}^j : (e; \param_i^j) \mapsto \mb{E}[f^j(e; \param_i^j)]$ is a continuously differentiable function.

\end{assumption}

Note that the payment to agent $i \in \mc{N}$ from principal $j \in \mc{M}$ depends on the observable $y^j(e)$, which depends on the joint effort $e$. In other words, the payment received by agent $i$ is affected by the efforts exerted by the other agents. Put another way, the payment contracts induce a game between the agents. Payment contracts of this form are studied in~\cite{westenbroek-19},~\cite{shah-16},~\cite{suliman-19} and~\cite{chen-17}. 

Let $\paramZ^j = (\paramZ_i^j)_i$ and $\param^j = (\param_i^j)_i$ denote the payment contract parameters for principal $j \in \mc{M}$. Similarly, let $\paramZ = (\paramZ^j)_j$ and $\param = (\param^j)_j$ denote the contract parameter vectors across all principals, and let $\paramZ_i = (\paramZ_i^j)_j$ and $\param_i = (\param_i^j)_j$ denote all the parameters of relevance to agent $i$.

Now, let us consider the agent's payoffs. 

\begin{assumption}[Agent model]
\label{ass:agent_model}
We assume that the agent's utility function is quasilinear in efforts and payments received; that is, for each agent $i$, their payoff is given by:
\[
u_i(e;\paramZ_i,\param_i) = \sum_{j \in \mc{M}} p_i^j( y^j(e); \paramZ_i^j, \param_i^j )  - e_i
\]
When selecting $e_i$, the agent knows the functions $p_i^j(\cdot;\paramZ_i,\param_i)$ and $y^j$.
\end{assumption}

As mentioned previously, our model captures a non-rivalrous good, so once agent $i$ exerts effort $e_i$, every principal will derive some value from the effort. Consequently, once agent $i$ exerts effort $e_i$, they will receive payments from all the principals.

Next, let us consider the principal's payoffs. 

\begin{assumption}[Principal's loss function]
\label{ass:principal_model}
The loss of each principal $j$ is given by:
\[
L_j(\paramZ^j, \param^j; e) = \sum_{i \in \mc{N}} p_i^j( y^j(e); \paramZ_i^j, \param_i^j ) - v^j(e)
\]
Here, $v^j$ is a (potentially random) function, representing the total value created from joint effort $e$. We assume $\bold{v}^j : e \mapsto \mb{E}[v^j(e)]$ is a continuously differentiable function.
\end{assumption}

The loss for each principal is the sum of all payments made by this principal to the agents, minus the value received when the agents exert effort $e$. In a data markets setting, $v^j(e)$ can represent the quality of statistical inference when the reported dataset is $y^j(e)$.

\begin{assumption}[Risk neutrality]
\label{ass:risk_neut}
We assume that the principals and agents are risk neutral and make their decisions ex-ante. In other words, the principals must decide $(\paramZ^j,\param^j)$ and the agents must decide $e$ prior to the realization of the observables $(y^j(e))_j$ (and thus prior to the realization of the payment values) as well as $(v^j(e))_j$, the value derived from the goods.

As all parties are risk neutral, their ex-ante decisions are made to optimize the expected value of their cost:
\begin{equation}
\label{eq:agent_exp_util}
\mb{E}[u_i(e;\paramZ_i,\param_i)] = \sum_{j \in \mc{M}} \left( \paramZ_i^j + \bold{f}^j(e; \param_i^j) \right) - e_i
\end{equation}
\begin{equation}
\label{eq:princ_cost}
\mb{E}[L_j(\paramZ^j, \param^j; e)] = \sum_{i \in \mc{N}} \left( \paramZ_i^j + \bold{f}^j(e; \param_i^) \right) - \bold{v}^j(e)
\end{equation}
\end{assumption}

Finally, we outline the constraints on the payment contracts the principals may choose from. These constraints ensure the payment contracts induce stable behavior in the agents, and that they willingly participate. We define the desirable properties of a collection of payment contracts here.

First, we consider the stability of chosen effort levels $e^*$. In our case, we enforce that $e^*$ must be a dominant strategy equilibrium between the agents. Intuitively, this means that each agent's payoff is maximized by choosing $e_i^*$, regardless of the effort levels $e_{-i}$ chosen by other agents. 

\begin{definition}[Dominant strategy equilibrium]
For the second stage, we say $e^*$ is a dominant strategy equilibrium for payment contract parameters $(\paramZ, \param)$ if for any $i$ and any other potential effort vector $e'$:
\begin{equation}
\label{eq:dom_strat}
\mb{E}[u_i(e_i^*, e_{-i}';\paramZ_i,\param_i)] \ge
\mb{E}[u_i(e';\paramZ_i,\param_i)]
\end{equation}
We say $e^*$ is unique if no other $e'$ satisfies this property.\footnote{Here, we use the common notational convention where we can single out the $i$th agent's action as $e_i$ in $e = (e_i, e_{-i})$. Although this is a slight abuse of notation, context will make it clear which agent is under consideration.}
\end{definition}

 Next, we wish to have voluntary participation of the agents. Formally, this means that their ex-ante utilities should be positive. We will refer to `ex-ante individual rationality' as simply `individual rationality' throughout this paper.

\begin{definition}[Individual rationality]
\label{def:ic}
Fix a set of payment contract parameters $(\paramZ,\param)$ with a unique dominant strategy equilibrium $e^*$. 
The payment contracts are {\em individually rational} with respect to $(\paramZ,\param)$ if, for every $i \in \mc{N}$:
\[
\mb{E}[u_i(e^*; \paramZ_i, \param_i)] \ge 0
\]
\end{definition}

Whereas individual rationality incentivizes participation (i.e. the creation of the non-rivalrous good), each individual principal must also incentivize the agents to share the non-rivalrous good with them. If the payments are not positive ex-ante, agent $i$ may still produce the good, but choose not to share it with the $j$th principal. Thus, we have another desirable condition.

\begin{definition}[Ex-ante positive payments]
\label{def:positive}
Fix a set of payment contract parameters $(\paramZ,\param)$ with a unique dominant strategy equilibrium $e^*$. 
The payment contracts are {\em positive} with respect to $(\paramZ,\param)$ if, for all $i \in \mc{N}$ and $j \in \mc{M}$:
\[
\mb{E}[p_i^j(y^j(e^*);\paramZ_i^j,\param_i^j)] \ge 0
\]
\end{definition}

This provides our desired behavior in the second stage of the game, formalized in the following assumption. the payment contracts should induce a unique dominant strategy equilibrium $e^*$, and $e^*$ should be individually rational and have positive payments in expectation. This is formalized in the following assumption.

\begin{assumption}[Constraints on the payment function]
\label{ass:payment_const}
The principals choose their payment contract parameters $(\paramZ,\param)$ subject to the constraints that the contracts $p$ induce a unique dominant strategy equilibrium $e^*$. This equilibrium $e^*$ must be individually rational and have positive payments in expectation.
\end{assumption}

We note the changes from the classical principal-agent problem to the MPMA-NRG problem. In the case where there is a single agent, there is often a constraint of incentive compatibility; in settings with multiple agents, this becomes the constraint of a dominant strategy equilibrium. The individual rationality constraint ensures participation, and is the same as in the classical principal-agent problem. Additionally, since there is a non-rivalrous good, we must introduce the new constraint of positive payments to ensure that each agent willingly shares their non-rivalrous good with each principal that provides payments. 

\subsection{Example: effort-averse data sources in data markets}
\label{sec:ex1}

In this subsection, we show how our framework encapsulates the model of a data market presented in~\cite{westenbroek-19}. In this setting, the principals are data aggregators and the agents are effort-averse data sources. A data source produces data by drawing one data sample from a distribution; the higher the level of effort, the lower the variance of the underlying distribution. The data aggregators have access to the reported data, but not the variance of the distributions that generated the data. 

More formally, in this example, the data aggregators are trying to estimate some function $\phi$. 
Each data source $i \in \mc{N}$ generates a data point at $x_i$:
\[
y_i(e_i) = \phi(x_i) + \eps_i
\]
Here, $\eps_i$ is zero mean noise with variance $\sigma_i^2(e_i)$. The variance $\sigma_i^2(e_i)$ decreases as the effort $e_i$ increases. Furthermore, the feature $x_i$ is common knowledge to all parties, and the created data $y(e)$ is the observable for all principals $j \in \mc{M}$.

In this case, the data aggregators issue payments of the form:
\[
p_i^j(y; \paramZ_i^j, \param_i^j) = \paramZ_i^j - \param_i^j \left| y_i - \widehat \phi_{-i}(x_i; y_{-i}) \right|^2
\]
Here, $\widehat \phi_{-i}$ is the leave-one-out estimator~\cite{cai-15}, where $\phi(x_i)$ is estimated using every other data source's data $y_{-i}$. Intuitively, this setting uses the other data sources $-i$ to generate an estimate of what data source $i$ {should} be reporting. The term on the right decomposes into two independent terms: one which increases the payment when the variance decreases, and another term which agent $i$ has no control over.

In this example:
\[
\bold{f}^j : (e, \param_i^j) \mapsto - \param_i^j \mb{E}\left[\left( y_i(e_i) - \widehat \phi_{-i}(x_i; y_{-i}(e)) \right)^2\right] 
\]
Here, the expectation is taken across the $\eps_i$ terms in the observables $y$. 

From the perspective of the principals, their loss is a combination of three terms: a penalty for poor statistical estimation, the cost of payments issued, and an optional penalty for the estimation quality of competing principals. For simplicity, we consider the estimation loss at a single point $x$:
\begin{equation*}
\begin{aligned}
\mb{E}[L_j( (\paramZ^j, \param^j))] = \mb{E} \left[\left( \phi(x) - \widehat \phi^j(x; y(e)) \right)^2\right] \\
- \sum_{k \neq j} \zeta_k^j \mb{E} \left[\left( \phi(x) - \widehat \phi^k(x; y(e)) \right)^2\right] \\
+ \sum_{i \in \mc{N}} \mb{E}[p_i^j(y(e); \paramZ^j, \param^j)]
\end{aligned}
\end{equation*}
Here, $(\zeta_k^j)_k \ge 0$ are parameters that act as weights which capture the competition between different principals. In particular, if the weight $\zeta_k^j$ is positive, then principal $j$ loses their competitive edge when the estimation of principal $k$ is better. 

Thus, in this example:
\begin{equation*}
\begin{aligned}
\bold{v}^j = - \mb{E} \left[\left( \phi(x) - \widehat \phi^j(x;y(e)) \right)^2\right] 
\\
+ \sum_{k \neq j} \zeta_k^j \mb{E} \left[\left( \phi(x) - \widehat \phi^k(x;y(e)) \right)^2\right]
\end{aligned}
\end{equation*}
Note that $\bold{v}^j$ depends on $e$ through the realized data $y(e)$ and the resulting estimators $(\widehat \phi^k)_{k \in \mc{M}}$. 

We note that there are several other approaches to incentive design for strategic data sources which similarly fall into our framework. For example, in~\cite{shah-16}, rather than using a leave-one-out estimator, the authors assume access to queries of known quality, referred to as `gold standard' queries. These approaches can also be encapsulated in the general formulation of the previous section.

\subsection{Problem Statement}

Our problem statement concerns the first stage of the game. We consider settings where the payment contracts induce a unique dominant strategy equilibrium, determining the interactions between agents. Given this enforced structure on the second stage of the game, how can we specify the behavior of the principals in the first stage? In addition to the constraints on the payment contracts outlined, the contract parameters $(\paramZ^j,\param^j)$ must be chosen to minimize principal $j$'s cost in some sense. 

Importantly, having uniqueness of some equilibrium concept which predicts the behavior of both the principals and agents in the MPMA-NRG problem would provide insight into what outcomes to expect in real world applications such as data markets.

\begin{problem}
What equilibrium selection methods can uniquely identify payment contract parameters $(\paramZ,\param)$ and an induced effort level $e^*$? 
\end{problem}

To be a reasonable model for interactions between strategic parties, the parameters $(\paramZ,\param)$ should form some equilibrium concept in the first stage between principals. Additionally, the resulting payment contracts must satisfy these properties in the second stage: the payment contracts induce a unique dominant strategy equilibrium $e^*$ between agents, and be individually rational with respect to $e^*$, with positive ex-ante payments.

Our paper shows that the most common equilibrium concepts and refinements fail to uniquely select an equilibrium in this problem setting. There is a fundamental ambiguity in multiple-principal, multiple-agent problems with non-rivalrous goods, due to the ambiguity in which principals can free-ride off others, and neither variational equilibrium or normalized equilibria can resolve this ambiguity.


\section{Equilibrium Concepts}
\label{sec:eq_concepts}


With our model in place, in this section, we introduce the equilibrium concepts we consider. In Section~\ref{sec:eq_fail}, we will show that none of these concepts will successfully determine a unique set of parameters $(\paramZ,\param)$.

First, let us introduce some notation. Let $\mc{P}_e(e^*)$ denote the set of payment contract parameters such that $e^*$ is a unique dominant strategy equilibrium, and the payment contracts are individually rational and positive with respect to $e^*$. Similarly, let $\mc{P} = \cup_{e^*} \mc{P}_e(e^*)$ denote the set of payment contracts such that there exists an $e^*$ that satisfies these properties.

We introduce the function $\mu$, which maps payment contract parameters to the dominant strategy effort levels. Formally, let $\mu : \mc{P} \rightarrow \mb{R}^\mc{N}$ denote the mapping from payment contract parameters $(\paramZ,\param)$ to the dominant strategy equilibrium $e^*$, which is well-defined for every element of the domain of $\mu$.

The following proposition states that if a payment contract induces a unique dominant strategy equilibrium between agents, this equilibrium depends only on the $\param$ parameters. That is, the parameters $\paramZ$ have no effect on the value of the dominant strategy equilibrium. 

\begin{proposition}
\label{prop:a_dep}

The mapping from contract parameters $(\paramZ,\param)$ to the dominant strategy equilibrium $e^*$ depends only on $\param$. That is, the function $\mu( \cdot , \param )$ is constant on $\mc{P}$.

\end{proposition}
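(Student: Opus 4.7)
The plan is to observe that in the agent's expected utility in equation~(\ref{eq:agent_exp_util}), the contract parameters $c_i^j$ appear only through the sum $\sum_{j \in \mc{B}} c_i^j$, which is a constant with respect to $e$. So shifting $c$ merely shifts each agent's payoff by an $e$-independent constant, which cannot affect the argmax defining the dominant strategy equilibrium. The main content of the proof is ensuring this preserves both the existence and uniqueness requirements that define membership of a pair in $\mc{P}$.

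Concretely, I would fix $a$ and pick any $c, c'$ such that both $(c, a) \in \mc{P}$ and $(c', a) \in \mc{P}$, and let $e^* = \mu(c, a)$ be the corresponding unique dominant strategy equilibrium. The first step is to subtract the two expected utilities, obtaining
\[
\mb{E}[u_i(e) \mid (c', a)] - \mb{E}[u_i(e) \mid (c, a)] = \sum_{j \in \mc{B}} \bigl( (c')_i^j - c_i^j \bigr),
\]
which is independent of $e$. Plugging this into the dominant strategy inequality~(\ref{eq:dom_strat}) shows that the inequality holds with $e^*$ under parameters $(c', a)$ if and only if it holds under $(c, a)$, with the same direction and the same strictness. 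Therefore $e^*$ is also a dominant strategy equilibrium under $(c', a)$.

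The last step is uniqueness. Suppose for contradiction that some $\tilde{e} \neq e^*$ were also a dominant strategy equilibrium under $(c', a)$. By the same constant-shift argument applied in the reverse direction, $\tilde{e}$ would also be a dominant strategy equilibrium under $(c, a)$, contradicting the uniqueness of $e^*$ there (which holds because $(c, a) \in \mc{P}$). Hence $\mu(c', a) = e^* = \mu(c, a)$, proving that $\mu(\cdot, a)$ is constant on the slice $\{(c, a) : (c, a) \in \mc{P}\}$.

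There is no real obstacle: once one writes down~(\ref{eq:agent_exp_util}) and isolates the $c$-dependent terms, the result is essentially immediate. The only point requiring mild care is the bookkeeping around uniqueness, i.e.\ transferring both the existence of the equilibrium in one direction and the absence of rival equilibria in the other direction. The argument does not rely on differentiability of $\bold{f}^j$ or any structural property beyond the quasilinearity in $c$ already baked into Assumption~\ref{ass:pay_struct}.
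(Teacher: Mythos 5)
Your proposal is correct and follows essentially the same route as the paper: both arguments rest on the observation that the $c_i^j$ enter the agent's expected utility~\eqref{eq:agent_exp_util} only as an additive constant independent of $e$, so they cancel out of the dominant-strategy inequality~\eqref{eq:dom_strat} and the set of dominant strategy equilibria (hence the unique one, and its uniqueness) depends only on $a$. Your version is slightly more explicit about transferring uniqueness in both directions, but this is the same argument the paper gives.
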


\begin{proof}
Pick any two points $(\paramZ,\param)$ and $(\tilde \paramZ,\param)$ in $\mc{P}$. Let $e^* = \mu(\paramZ,\param)$. We wish to show that $\mu(\tilde \paramZ, \param) = e^*$.

By combining Equations~\eqref{eq:dom_strat} and~\eqref{eq:agent_exp_util} and the fact that $e^*$ is the unique dominant strategy equilibrium for $(\paramZ,\param)$, the following holds for all other effort vectors $e$:
\[
\sum_{j \in \mc{M}} \bold{f}^j((e_i^*, e_{-i}) ; \param_i^j ) - e_i^* \ge
\sum_{j \in \mc{M}} \bold{f}^j(e ; \param_i^j) - e_i
\]
However, this is exactly the same condition for $e^*$ to be a dominant strategy equilibrium for $(\tilde \paramZ, \param)$. By uniqueness, we have $\mu(\tilde \paramZ, \param) = e^*$, as desired. 
\end{proof}

Intuitively, Proposition~\ref{prop:a_dep} holds because the parameters $\paramZ$ act as a constant shift and do not affect the strategic nature of the game at all. Indeed, one can thinking of this as follows: the inclusion of the $\paramZ$ parameters ensures individual rationality while the inclusion of the $\param$ parameters determine the dominant strategy equilibrium.

In light of Proposition~\ref{prop:a_dep}, we will use the notation $\mu( \param )$ without any loss of generality, dropping the $\paramZ$ argument entirely.

Now, we can introduce the concept of generalized Nash equilibrium. The introduction of $\mu$ allows us to note how the dominant strategy effort levels $e^*$ vary with the $\param$ parameters in the following definition.

\begin{definition}[Generalized Nash equilibrium (GNE)~\cite{facchinei-07}]

A set of contract parameters $(\paramZ,\param)$ is a {\em generalized Nash equilibrium} for the first stage of the game if for every $j$, $(\paramZ^j, \param^j)$ solves the minimization problem:
\begin{equation}
\begin{aligned}
\label{eq:gne_opt}
\min_{(\paramZ^j,\param^j)} &~\mb{E}[L(\paramZ^j,\param^j; \mu(\param) )] \\
\mathrm{s.t.} &~(\paramZ,\param) \in \mc{P}
\end{aligned}
\end{equation}
\end{definition}

In other words, holding the actions of other agents constant, $(\paramZ^j, \param^j)$ minimizes the principal $j$'s cost subject to the constraint that $(\paramZ,\param)$ induce a unique dominant strategy equilibrium that is individually rational and ex-ante positive.

It is not uncommon that games admit an infinite set of GNEs~\cite{kulkarni-09}. This has motivated the refinements of equilibrium concepts. We present two here: the variational equilibrium and normalized equilibrium. 

\begin{definition}[Variational equilibria (VE)~\cite{kulkarni-12}]

We say a set of contract parameters $(\paramZ,\param)$ is a {\em variational equilibrium} of the first stage of the game if $(\paramZ,\param) \in \mc{P}$ and, for any $(\tilde \paramZ, \tilde \param) \in \mc{P}$:
\[
\begin{bmatrix}
    \nabla_{(\paramZ^1,\param^1)} \mb{E}[L_1(\paramZ^1,\param^1; \mu(\param))] \\
    \vdots \\
    \nabla_{(\paramZ^M,\param^M)} \mb{E}[L_M(\paramZ^M,\param^M; \mu(\param))] \\
\end{bmatrix}^\T
[(\tilde \paramZ, \tilde \param) - (\paramZ, \param)] \ge 0
\]
\end{definition}

Whereas the first-order optimality conditions of GNE consider variations of a single principal's actions, the conditions for VE check the first-order optimality conditions for all feasible variations, allowing for multiple principals to change their parameters so long as the variation remains feasible.

Furthermore, using the function $\mu$, we can express the individual rationality and positive payments conditions as a finite number of constraints:
\begin{equation}
\begin{aligned}
\label{eq:p_def}
\mb{E}[u_i(\mu(\param);\paramZ_i,\param_i)] \ge 0 & \text{ for all } i \in \mc{N} \\
\mb{E}[p_i^j(y^j(\mu(\param));\paramZ_i^j,\param_i^j)] \ge 0 & \text{ for all } i \in \mc{N} \text{ and } j \in \mc{M}
\end{aligned}
\end{equation}
Thus, the individual rationality and positive payment constraints in Equation~\eqref{eq:gne_opt} can be thought of as a finite number of inequality constraints, and admits a finite number of Lagrange multipliers. This allows us to introduce another equilibrium refinement method.

\begin{definition}[Normalized equilibria (NoE)~\cite{heusinger-12}]
Let $\gamma \in \mb{R}_+^\mc{M}$. Let $\lambda^j$ denote the KKT multipliers from the Equation~\eqref{eq:gne_opt} for principal $j$. 
We say a GNE is a {\em normalized equilibrium} with weights $\gamma$ if the KKT multipliers $\lambda^j$ satisfy the normalization condition:
\[
\gamma_1 \lambda^1 = \gamma_2 \lambda^2 = \dots = \gamma_M \lambda^M
\]
\end{definition}


\section{Failure of Equilibrium Concepts}
\label{sec:eq_fail}


In this section, we outline why the equilibrium concepts in Section~\ref{sec:eq_concepts} fail to uniquely select an equilibrium in the MPMA-NRG problem. 
The inability of variational equilibria to select a unique GNE in Section~\ref{sec:ve_fail} and the non-uniqueness of normalized equilibria in Section~\ref{sec:noe_fail} are the main results of this paper. In doing so, we highlight fundamental degeneracies when multiple principals attempt to incentivize multiple agents to produce a non-rivalrous good.

\subsection{Non-uniqueness of GNE}

First, we note one property which is the underlying reason why the aforementioned equilibrium selection methods fail. Whenever the principal's loss function is quasilinear in the payments given, the individual rationality constraints are always binding at generalized Nash equilibria. This is given in the next proposition.

\begin{proposition}
\label{prop:ir_bind}
Let $(\paramZ,\param)$ be any generalized Nash equilibrium with associated effort level $e^* = \mu(\param)$. Then the individual rationality constraints are binding, i.e. the following holds for every $i \in \mc{N}$:
\[
\mb{E}[u_i(e^*;\paramZ_i,\param_i)] = \sum_{j \in \mc{M}} \left( \paramZ_i^j + \bold{f}^j(e^*; \param_i^j) \right) - e_i^* = 0
\]
\end{proposition}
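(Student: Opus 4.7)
The plan is a contradiction argument driven by a single one-parameter perturbation in the fixed-cost term $c_i^j$. Suppose that at a GNE $(c,a)$ with induced effort $e^* = \mu(a)$ there exists an agent $i$ whose IR constraint is slack, i.e.,
\[
\mathbb{E}[u_i(e^*)] = \sum_{j \in \mc{B}} \bigl( c_i^j + \bold{f}^j(a_i^j, e^*) \bigr) - e_i^* > 0.
\]
The goal is to exhibit a principal $j^*$ who can feasibly decrease $c_i^{j^*}$ and thereby strictly reduce its own expected cost, contradicting that $(c^{j^*},a^{j^*})$ solves~\eqref{eq:gne_opt}.

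First, I would identify such a $j^*$. By the positive-payment constraint in $\mc{P}$ (Definition~\ref{def:positive}), each term $c_i^j + \bold{f}^j(a_i^j, e^*)$ is nonnegative. Using slackness of IR together with the physically natural convention $e_i^* \ge 0$, we get $\sum_j (c_i^j + \bold{f}^j(a_i^j, e^*)) > e_i^* \ge 0$, so at least one principal $j^*$ satisfies $c_i^{j^*} + \bold{f}^{j^*}(a_i^{j^*}, e^*) > 0$. Next, I would consider the perturbation $c_i^{j^*} \mapsto c_i^{j^*} - \delta$ for small $\delta > 0$, with every other parameter held fixed. By Proposition~\ref{prop:a_dep}, $\mu(a)$ is unchanged because only $a$ affects the dominant-strategy equilibrium. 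Agent $i$'s IR value drops by exactly $\delta$ and remains nonnegative for $\delta < \mathbb{E}[u_i(e^*)]$; the payment $c_i^{j^*} + \bold{f}^{j^*}$ drops by $\delta$ and remains nonnegative by strict positivity at the nominal point; all other agents' IR constraints and all other positive-payment constraints are untouched. Thus the perturbed parameters lie in $\mc{P}$, while by the quasilinear form~\eqref{eq:princ_cost} principal $j^*$'s expected cost strictly decreases by $\delta$, yielding the contradiction.

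The crux is simply that the $c$-parameters do no strategic work (Proposition~\ref{prop:a_dep}) and enter cost linearly with coefficient $+1$, so a principal always wants to shrink them down against whichever constraint binds first; the GNE condition then forces the binding constraint to be IR, not positivity, because IR is a sum over all principals of the lever each principal wields unilaterally. The main obstacle is handling the edge case in which every payment to agent $i$ is exactly zero while IR is still slack; this can only occur with $e_i^* < 0$, which is outside the intended regime of effort and is excluded by the standing nonnegativity of effort in the motivating applications, so the perturbation above is always available.
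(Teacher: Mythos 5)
Your proof is correct and follows essentially the same route as the paper's: a principal unilaterally lowers the lump-sum term $c_i^j$ to strictly reduce its own cost, contradicting optimality in~\eqref{eq:gne_opt}. In fact your version is more careful than the paper's, which decreases $c_i^j$ by the full slack $s_i$ without checking the ex-ante positivity constraint; your small-$\delta$ perturbation at a principal whose expected payment to agent $i$ is strictly positive handles that feasibility issue, at the (explicitly acknowledged) cost of assuming $e_i^* \ge 0$, which the paper never states (it takes $e_i \in \mb{R}$) but which is needed to rule out the edge case where every payment to agent $i$ is zero while IR is slack.
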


\begin{proof}
Suppose there exists an $i \in \mc{N}$ such that the individual rationality constraint is not binding. Define $s_i$ as:
\[
s_i = \sum_{j \in \mc{M}} \left( \paramZ_i^j + \bold{f}^j(e^*; \param_i^j) \right) - e_i^* > 0
\]
Then, any principal $j$ can decrease their loss by changing from $\paramZ_i^j$ to $\paramZ_i^j - s_i$. Thus, $(\paramZ^j,\param^j)$ is not an optimizer of Equation~\eqref{eq:gne_opt}, and cannot be a GNE.
\end{proof}

Now, we will explicitly calculate the simplex of generalized Nash equilibria for the game between principals. It is common for GNE to not be unique, and, in fact, it is common for there to be a manifold of GNEs. This lack of uniqueness is one of the commonly stated issues in the study of GNEs~\cite{kulkarni-09}, and has motivated the study of equilibrium refinements such as VE and NoE.

By Proposition~\ref{prop:ir_bind}, we know that the individual rationality constraints always bind. This can be rewritten for any $i \in \mc{N}$:
\[
\sum_{j \in \mc{M}} \paramZ_i^j = e_i^*
- \sum_{j \in \mc{M}} \bold{f}^j(e^*; \param_i^j) =
\mu(\param)_i
- \sum_{j \in \mc{M}} \bold{f}^j(\mu(\param); \param_i^j)
\]
Let's define this quantity on the right side as $g_i(\param)$:
\begin{equation}
\label{eq:def_g}
g_i(\param) = \mu(\param)_i
- \sum_{j \in \mc{M}} \bold{f}^j(\mu(\param);\param_i^j)
\end{equation}
Thus, by re-arranging terms, we can write:
\[
\paramZ_i^j = g_i(\param) - \sum_{k \neq j} \paramZ_i^k
\]
This allows us to express the base wage term $\paramZ_i^j$ as a function of all the other parameters. In particular, we can plug this equality into each principal's ex-ante cost to remove the dependence on $\paramZ_i^j$.
\begin{equation}
\begin{aligned}
\label{eq:no_c_gne}
\mb{E}[L_j(\param^j; \mu(\param))] = 
\sum_{i \in \mc{N}} \left( g_i(\param) - \sum_{k \neq j} \paramZ_i^k + \bold{f}^j(\mu(\param); \param_i^j) \right) \\ - \bold{v}^j(\mu(\param))
\end{aligned}
\end{equation}
In light of this observation, we can think of the interaction between the principals as having two parts. On one hand, in the $\param$ parameters, they decide which game to induce among agents, and what the equilibrium effort levels $e^*$ should be. On the other hand, in the $\paramZ$ parameters, they divide the expected gains from the agents accordingly. 

\begin{proposition}[Simplex of GNE]
\label{prop:gne_simplex}
Let $(\paramZ,\param)$ be any generalized Nash equilibrium. Then, take any $\tilde \paramZ$ such that:
\begin{equation}
\label{eq:c_constraint}
\tilde \paramZ_i^j = g_i(\param) - \sum_{k \neq j} \tilde \paramZ_i^k
\qquad
\tilde \paramZ_i^j \ge - f^j(\mu(\param);\param_i^j)
\end{equation}
Then $(\tilde \paramZ, \param)$ is also a generalized Nash equilibrium. 

Furthermore, there are no other generalized Nash equilibria with the same $\param$ parameters, i.e. the set of $\tilde \paramZ$ which satisfy Equation~\eqref{eq:c_constraint} are exactly the set of $\tilde \paramZ$ such that $(\tilde \paramZ, \param)$ is a GNE.
\end{proposition}

\begin{proof}
The change from $\paramZ$ to $\tilde \paramZ$ does not change the $\param$ parameters that optimize Equation~\eqref{eq:no_c_gne}, and we can see that Equation~\eqref{eq:c_constraint} ensures that the new $(\tilde \paramZ, \param)$ remains feasible for each principal's optimization. Conversely, the GNE condition will imply Equation~\eqref{eq:c_constraint}.
\end{proof}

\subsection{Failure of variational equilibria as a selection method}
\label{sec:ve_fail}

Now that we have noted that there is generally not a unique generalized Nash equilibria, we will show that variational refinements will not allow us to select among the GNE. 

\begin{theorem}
\label{thm:ve}
If one generalized Nash equilibrium is a variational equilibrium, then all generalized Nash equilibria with the same $\param$ parameters are variational equilibria.
\end{theorem}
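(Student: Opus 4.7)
The plan is to exploit the quasilinearity of $\mb{E}[L_j]$ in the $c$ parameters: this will make the pseudo-gradient defining the variational inequality essentially independent of $c$, and Proposition~\ref{prop:ir_bind} then forces the difference between any two GNEs sharing the same $a$ to lie in the kernel of the $c$-blocks of that pseudo-gradient.

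First I would compute the pseudo-gradient used in the VE condition. From Equation~\eqref{eq:princ_cost}, $\partial \mb{E}[L_j(c^j, a^j; \mu(a))] / \partial c_i^j = 1$ for every $i$, with no dependence on $c$ or $a$. The derivatives with respect to $a^j$ are functions of $a$ alone, since $\mu$ depends only on $a$ by Proposition~\ref{prop:a_dep} and $\mb{E}[L_j]$ has no $c$-dependence outside the linear $\sum_i c_i^j$ term. Thus the stacked gradient, call it $F(a)$, depends only on $a$.

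Next, let $(c,a)$ be a VE and let $(\hat c, a)$ be any other GNE with the same $a$. For any $(\tilde c, \tilde a) \in \mc{P}$, I would split
\[
F(a)^\T \bigl[(\tilde c, \tilde a) - (\hat c, a)\bigr] = F(a)^\T \bigl[(\tilde c, \tilde a) - (c, a)\bigr] + F(a)^\T \bigl[(c - \hat c,\, 0)\bigr].
\]
The first term is non-negative because $(c,a)$ is a VE. The second involves only the $c$-blocks of $F(a)$, each of which is a vector of ones, so it equals
\[
\sum_{i \in \mc{N}} \left( \sum_{j \in \mc{B}} c_i^j - \sum_{j \in \mc{B}} \hat c_i^j \right).
\]
Proposition~\ref{prop:ir_bind} forces both inner sums to equal $g_i(a)$ as defined in Equation~\eqref{eq:def_g}, so this term vanishes and the VE inequality transfers from $(c,a)$ to $(\hat c, a)$.

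I expect the only non-routine step to be the gradient computation: one must verify that all chain-rule contributions arising from $\nabla_a \mu(a)$ in $\nabla_{a^j} \mb{E}[L_j(c^j, a^j; \mu(a))]$ are indeed $c$-free, which relies squarely on Proposition~\ref{prop:a_dep}. Once that is in hand, the theorem reduces to the clean observation that shifts of $c$ preserving each row sum $\sum_j c_i^j$ annihilate the $c$-blocks of $F(a)$, and by Proposition~\ref{prop:gne_simplex} these shifts are exactly the moves connecting distinct GNEs that share the same $a$.
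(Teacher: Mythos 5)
Your proof is correct and follows essentially the same route as the paper: both rely on the pseudo-gradient being independent of $c$ (all-ones in the $c$-blocks) and on the row sums $\sum_j c_i^j$ being pinned to $g_i(a)$ at any GNE, so that moving along the simplex of GNEs with fixed $a$ annihilates the pseudo-gradient. If anything, your explicit decomposition $F(a)^\T[(\tilde c,\tilde a)-(\hat c,a)] = F(a)^\T[(\tilde c,\tilde a)-(c,a)] + F(a)^\T[(c-\hat c,0)]$ makes the final transfer of the VE inequality to the other GNE slightly more explicit than the paper's presentation, which only verifies that the inner product between the two equilibria vanishes.
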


\begin{proof}
First, let's define $\tilde f^j$ as:
\[
\tilde f^j(\param) = \sum_{j \in \mc{M}} \bold{f}^j(\mu(\param); \param_i^j) - \bold{v}^j(\mu(\param))
\]
Principal $j$'s ex-ante cost in Equation~\eqref{eq:princ_cost} can be written:
\[
\mb{E}[L_j(\paramZ^j, \param^j; \mu(\param))] = \sum_{i \in \mc{N}} \paramZ_i^j + \tilde f^j(\param)
\]
Now, let $(\paramZ,\param)$ be a generalized Nash equilibria and a variational equilibria. 
We can write the VE condition as follows. For any variation $(\tilde \paramZ, \tilde \param) \in \mc{P}$:
\begin{equation*}
\begin{aligned}
\begin{bmatrix}
    \nabla_{(\paramZ^1,\param^1)} \mb{E}[L_1(\paramZ^1,\param^1; \mu(\param))] \\
    \vdots \\
    \nabla_{(\paramZ^M,\param^M)} \mb{E}[L_M(\paramZ^M,\param^M; \mu(\param))] \\
\end{bmatrix}^\T
[(\tilde \paramZ, \tilde \param) - (\paramZ, \param)] = \\
\sum_{j \in \mc{M}} \sum_{i \in \mc{N}} (\tilde \paramZ_i^j - \paramZ_i^j) + 
\sum_{j \in \mc{M}} \sum_{i \in \mc{N}} 
\left.
\frac{\partial \tilde f^j}{\partial \param_i^j}
\right|_{\param} (\tilde \param_i^j - \param_i^j) \ge 0
\end{aligned}
\end{equation*}

Now, let $(\bar \paramZ, \param)$ be any other GNE with the same $\param$ parameters and take any $(\tilde \paramZ, \tilde \param) \in \mc{P}$. By Proposition~\ref{prop:gne_simplex} and Equation~\eqref{eq:c_constraint}, the following holds for each $i$:
\[
\sum_{j \in \mc{M}} \bar \paramZ_i^j = 
\sum_{j \in \mc{M}} \paramZ_i^j
\]
Thus:
\[
\sum_{j \in \mc{M}} \sum_{i \in \mc{N}} (\tilde \paramZ_i^j - \paramZ_i^j) =
\sum_{j \in \mc{M}} \sum_{i \in \mc{N}} (\tilde \paramZ_i^j - \bar \paramZ_i^j)
\]
Considering the VE condition on $(\bar \paramZ, \param)$, we can see that the condition is the same as the condition for $(\paramZ, \param)$ to be a VE. The desired result follows.
\end{proof}

\subsection{Non-uniqueness of normalized equilibria}
\label{sec:noe_fail}

In this section, we show that normalized equilibria also fail to select a unique generalized Nash equilibrium. Recall the set of constraints were given by Equation~\eqref{eq:p_def}. Thus, the individual rationality constraint is for each $i \in \mc{N}$:
\[
\sum_{j \in \mc{M}} \left( \paramZ_i^j + \bold{f}^j(\mu(\param);\param_i^j) \right) - \mu(\param)_i \ge 0
\]
The positivity constraint is for each $i \in \mc{N}$ and $j \in \mc{M}$:
\[
\paramZ_i^j + \bold{f}^j(\mu(\param);\param_i^j) \ge 0
\]
Let:
\begin{equation*}
\begin{aligned}
\tilde g_i(\paramZ,\param) & = 
- \left( \sum_{j \in \mc{M}} \left( \paramZ_i^j + \bold{f}^j(\mu(\param);\param_i^j) \right) - \mu(\param)_i \right) \\
& = - \sum_{j \in \mc{M}} \paramZ_i^j + g_i(\param)
\end{aligned}
\end{equation*}
Here, $g_i(\param)$ is as defined in Equation~\eqref{eq:def_g}. Similarly, let:
\[
\tilde h_{i,j}(\paramZ,\param) =
-\left( \paramZ_i^j + \bold{f}^j(\mu(\param);\param_i^j) \right)
\]
Thus, our inequality constraints are simply $\tilde g_i(\paramZ,\param) \le 0$ for $i \in \mc{N}$ and $\tilde h_{i,j}(\paramZ,\param) \le 0$ for $i \in \mc{N}$ and $j \in \mc{M}$.

\begin{theorem}
Consider any generalized Nash equilibria where all payments are positive in expectation. This GNE is a normalized equilibria for $\gamma_1 = \gamma_2 = ... = \gamma_N$.
\end{theorem}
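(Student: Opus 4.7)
The plan is to identify the KKT multipliers explicitly at such a GNE and then verify the normalization condition by inspection. First, Proposition~\ref{prop:ir_bind} guarantees that every individual rationality constraint $\tilde{g}_i(c,a) \le 0$ is binding at any GNE, while the hypothesis that all expected payments are strictly positive makes each positivity constraint $\tilde{h}_{i,k}(c,a) \le 0$ strictly slack. Complementary slackness then forces the positivity multipliers (call them $\nu_{i,k}^j$) in every principal's KKT system to vanish, so only the IR multipliers $\lambda_i^j$ can be nonzero.

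Second, I will read off each $\lambda_i^j$ from stationarity with respect to $c_i^j$ in principal $j$'s problem. Using the reformulation $\mb{E}[L_j] = \sum_{i'} c_{i'}^j + \tilde{f}^j(a)$ from the proof of Theorem~\ref{thm:ve}, we get $\partial \mb{E}[L_j]/\partial c_i^j = 1$. On the constraint side, $\tilde{g}_{i'}$ depends on $c_i^j$ only when $i' = i$, with derivative $-1$, so stationarity in $c_i^j$ collapses to $1 - \lambda_i^j = 0$, yielding $\lambda_i^j = 1$ for every $i \in \mc{N}$ and every $j \in \mc{B}$. The full multiplier vector $\lambda^j$ is therefore the same for every principal: IR multipliers identically $1$, positivity multipliers identically $0$.

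With equal weights $\gamma_1 = \dots = \gamma_M$, the scaled vectors $\gamma_j \lambda^j$ coincide across $j$, which is exactly the normalization condition. The one technical matter is confirming that the multipliers are well defined and unique; this follows from LICQ, since the gradients of the active IR constraints in $(c^j,a^j)$-space each carry a $-1$ in a distinct $c_i^j$ coordinate and are thus linearly independent. I expect the only substantive difficulty to be bookkeeping across the shared constraint set: every principal sees the same constraints but maintains their own copy of the multipliers, and the argument hinges on noticing that stationarity in $c_i^j$ decouples across principals and pins each $\lambda_i^j$ to exactly $1$ — the quasilinearity phenomenon that drives the non-uniqueness results elsewhere in the paper.
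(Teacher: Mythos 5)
Your proposal is correct and follows essentially the same route as the paper: use strict positivity of expected payments plus complementary slackness to kill the positivity multipliers, then read off $\lambda_i^j = 1$ from stationarity in the $c_i^j$ coordinates (where the cost gradient is $1$ and the IR constraint gradient is $-1$), so every principal carries the identical multiplier vector and the normalization condition holds with equal weights. Your added LICQ remark about uniqueness of the multipliers is a small refinement the paper omits, but the core argument is the same.
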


\begin{proof}
For notational simplicity, let $x^j = (\paramZ^j,\param^j)$.
The first-order stationarity condition for optimality of principal $j$ is:
\begin{equation}
\begin{aligned}
\label{eq:KKT1}
\nabla_{x^j} \mb{E}[L_j (\paramZ^j, \param^j; \mu(\param))] + 
\sum_{i \in \mc{N}} \lambda_i \nabla_{x^j} \tilde g_i(\paramZ,\param) \\
+ \sum_{i \in \mc{N}} \sum_{j \in \mc{M}} \nu_{i,j} \nabla_{x^j} \tilde h_{i,j}(\paramZ,\param) = 0
\end{aligned}
\end{equation}
Let $\tilde f^j$ be as defined in the proof of Theorem~\ref{thm:ve}. Then, the gradients in Equation~\eqref{eq:KKT1} can be calculated:
\[
\nabla_{x^j} \mb{E}[L_j (\paramZ^j, \param^j; \mu(\param))]^\T = 
\begin{bmatrix}
1_{1 \times N} & 
\frac{\partial \tilde f^j}{\partial \param_1^j} & \dots &
\frac{\partial \tilde f^j}{\partial \param_N^j}
\end{bmatrix}
\]
(Here, $1_{m \times n}$ denotes the matrix of dimension $m \times n$ whose entries are all 1.)
Similarly:
\[
\nabla_{x^j} \tilde g_i(\paramZ,\param)^\T =
\begin{bmatrix}
0 & \dots & \underbrace{-1}_{i^{th} \text{pos}} & \dots 0 &
\frac{\partial g_i}{\partial \param_1^j} & \dots &
\frac{\partial g_i}{\partial \param_N^j}
\end{bmatrix}
\]
Now, consider any generalized Nash equilibria in the relative interior of the simplex defined by Equation~\eqref{eq:c_constraint}. By complementary slackness, at these interior points, $\nu_{i,j} = 0$ for all $i$ and $j$. Thus, solving Equation~\eqref{eq:KKT1} for interior points would yield $\lambda_i = 1$ as the only possible values for all $i$.
\end{proof}


\section{Conclusion}
\label{sec:conclusion}

Motivated by applications to data markets, we introduced the multiple-principal, multiple-agents problem with a non-rivalrous good in this paper. We showed that the payment contracts that have been studied in the literature thus far will, in the presence of multiple principals, leads to a multiplicity of equilibria, arising from ambiguity in which principals can free-ride off the others. This is structurally very different from situations where there is either a single principal or a rivalrous good. We have shown that this multiplicity of equilibria exists even for various refinements of equilibrium concepts. The proofs provide intuition for why we have this fundamental degeneracy.

This implies that most existing equilibrium concepts cannot provide predictions on the outcomes of data markets emerging today. Prior to this work, we believed that this multiplicity of equilibria could be addressed by modifications to the payment contracts. However, in this paper, we showed that this degeneracy holds even for a general class of payment contracts, and the proofs of our theorems in this paper outline technical reasons why. This shows that, in order to understand the behavior of strategic parties in data markets, we may need to explore new equilibrium concepts that provide uniqueness in the settings considered in this paper.


\bibliography{./refs}


\end{document}